\documentclass[twocolumn,10pt]{IEEEtran}
\usepackage{braket}

\usepackage{booktabs}

\input{def.tex}
\usepackage{dsfont}
\usepackage{minibox}
\DeclareSymbolFont{matha}{OML}{txmi}{m}{it}
\DeclareMathSymbol{\varv}{\mathord}{matha}{118}
\usepackage{eurosym}
\usepackage{hhline,physics}

\usepackage{multicol}
\usepackage{algorithm}
\usepackage{graphicx}
\usepackage{textcomp}
\usepackage{caption,pifont}
\usepackage[multiple]{footmisc}

\usepackage{algorithmic}

\makeatletter
\renewcommand{\baselinestretch}{0.91}

\IEEEoverridecommandlockouts

\begin{document}
	\title{Stacked Intelligent Metasurface–Assisted Fluid Antenna Systems: Outage Probability} 
	\author{Anastasios Papazafeiropoulos,~\IEEEmembership{Senior Member,~IEEE}     \thanks{A. Papazafeiropoulos is with the Communications and Intelligent Systems Research Group, University of Hertfordshire, Hatfield AL10 9AB, U. K. 
			Corresponding author's email: tapapazaf@gmail.com. }}
	\maketitle\vspace{-1.9cm}
	\begin{abstract}	
Stacked intelligent metasurfaces (SIMs) and fluid antenna systems (FAS) are emerging technologies for wave-domain and spatial signal manipulation, respectively.This letter proposes a novel joint SIM–FAS communication model in which transmission and reception are performed by a SIM and an FAS, respectively. Using the block-diagonal matrix approximation (BDMA), a closed-form expression for the outage probability is derived, and the SIM phase shifts are optimized to minimize outage. Numerical results validate the analytical accuracy and demonstrate substantial performance gains over conventional benchmark schemes.

	\end{abstract}
	
	\begin{keywords}
		Stacked intelligent metasurfaces (SIMs), 	fluid antenna systems (FASs), outage probability
	\end{keywords}
	
	\section{Introduction}
	
Stacked intelligent metasurfaces (SIMs) extend conventional metasurfaces by enabling wave-domain analog beamforming with reduced hardware complexity and energy consumption~\cite{An2023d,Papazafeiropoulos2024a}. A SIM-assisted holographic MIMO framework with few RF chains was introduced in~\cite{An2023d}, while subsequent works investigated hybrid SIM-based MISO/MIMO designs and achievable rate performance in multi-user systems~\cite{Papazafeiropoulos2024a,Papazafeiropoulos2025,Papazafeiropoulos2024b}.\footnote{\textcolor{black}{In addition, SIMs improve energy efficiency by enabling wave-domain analog processing with fewer RF chains, reducing hardware complexity and power consumption. Optimal SIM design requires balancing passive beamforming gains with implementation costs, where layer count and system architecture critically determine the overall energy efficiency trade-off \cite{Shi2026,Perovic2025}.}}
	
Fluid antenna systems (FASs) enable dynamic repositioning of the radiating port, introducing an additional spatial degree of freedom for physical-layer optimization~\cite{Wong2023,Wong2020,Khammassi2023,VegaSanchez2023}. Their feasibility has been experimentally validated~\cite{Shen2024}, and extensions to multi-antenna and port-selection frameworks have been widely studied~\cite{New2023,Chai2022,Mei2024}. Surveys and recent works have also examined FAS integration with reconfigurable intelligent surfaces (RISs)~\cite{Yao2025,Yao2025a}.

This work proposes a novel SIM–FAS communication framework based on the block-diagonal matrix approximation (BDMA) method~\cite{RamirezEspinosa2024}. By combining wave-domain reconfigurability with spatial adaptability, the proposed system integrates a SIM into the transmit chain, enabling analog precoding and coherent wavefront synthesis. This directly controls the field seen by the FAS ports, leading to distinct outage behavior compared to RIS-assisted FAS systems. 
\textcolor{black}{Unlike fluid reconfigurable intelligent surface (FRIS), which rely on a cascaded channel, the proposed SIM–FAS architecture performs wave-domain precoding at the transmitter, directly shaping both deterministic and scattered components \cite{Zhu2025}.  Furthermore, integrating FAS introduces spatial diversity via port selection, creating a joint wave- and spatial-domain design.}\footnote{\textcolor{black}{Recent FRIS introduces element-positioning flexibility to shape the propagation environment, whereas the proposed SIM–FAS framework performs transmit-side wave-domain precoding and receiver-side port selection, resulting in a fundamentally different signal model and statistics \cite{Zhu2025}. Furthermore, in practice, SIMs are affected by hardware impairments  \cite{Li2025}. However, in this work, we adopt an idealized model for analytical insight.}} Moreover, we derive a closed-form outage probability expression and optimize the SIM phase shifts accordingly. Numerical results validate the BDMA-based analysis and demonstrate significant performance gains over conventional benchmarks.

	\section{System Model}\label{System}
We consider a downlink where a base station (BS) employs a SIM for wave-domain precoding. The SIM has $L$ layers with $M$ meta-atoms, each described by $\boldsymbol{\Theta}_\ell = \mathrm{diag}(e^{j\theta_{\ell,1}},\ldots,e^{j\theta_{\ell,M}})$, $\ell \in \mathcal{L}$. The resulting transformation is
\begin{equation}
	\mathbf{g}_{\mathrm{SIM}} = \left( \prod_{\ell=2}^{L} \boldsymbol{\Theta}_{\ell}\mathbf{W}_{\ell} \right)\boldsymbol{\Theta}_{1}\mathbf{w}_{1},
	\label{eq:SIM}
\end{equation}
where $\mathbf{W}_{\ell} \in \mathbb{C}^{M \times M}$ is the inter-layer coupling matrix \cite[Eq.~9]{An2023d}, and $\mathbf{w}_{1}$ is the input excitation. The phases $\{\boldsymbol{\Theta}_\ell\}$ are set for constructive focusing.\footnote{\textcolor{black}{In this work, SIM phases are designed based on  statistical CSI and kept quasi-static, capturing large-scale effects, while the FAS exploits instantaneous fading via port selection for diversity.}}

At the receiver, the user employs a FAS with $N$ reconfigurable ports over an aperture $W\lambda$. The antenna selects the best port based on instantaneous channel conditions, exploiting spatial diversity along with the SIM beamforming gain.
	
We consider a Rician fading channel with line-of-sight (LoS) and non-line-of-sight (NLoS) components. Let $\mathbf{h}_k \in \mathbb{C}^{M \times 1}$ denote the channel to the $k$-th FAS port:
\begin{equation}
	\mathbf{h}_k = 
	\sqrt{\frac{\alpha K}{K+1}}\,\bar{\mathbf{h}}_k +
	\sqrt{\frac{\alpha}{K+1}}\,\tilde{\mathbf{h}}_k,
	\label{eq:h_k}
\end{equation}
where $\alpha$ is the path loss, $K$ the Rician factor, $\bar{\mathbf{h}}_k$ the LoS component, and $\tilde{\mathbf{h}}_k \sim \mathcal{CN}(0,1)$ the NLoS component.

	Given that the FAS ports are closely spaced, their channels exhibit spatial correlation. According to the Jakes model, the correlation coefficient between the first and the $k$-th port is given by
	\begin{equation}
		\mu_{1,k} = J_0\!\left(\frac{2\pi (k-1)W}{N-1}\right), 
		\quad k \in \{1,2,\dots,N\},
		\label{eq:jakes}
	\end{equation}
	where $J_0(\cdot)$ is the zero-order Bessel function of the first kind.  
	The correlation matrix of $\big[\mathbf{h}_1(m),\dots,\mathbf{h}_N(m)\big]$, where $m$ denotes the meta-atom index, is modeled as a Toeplitz matrix
	\begin{equation}
		\boldsymbol{\Sigma} =
		\begin{pmatrix}
			\mu_{1,1} & \mu_{1,2} & \cdots & \mu_{1,N}\\
			\mu_{1,2} & \mu_{1,1} & \cdots & \mu_{1,N-1}\\
			\vdots & \vdots & \ddots & \vdots\\
			\mu_{1,N} & \mu_{1,N-1} & \cdots & \mu_{1,1}
		\end{pmatrix} \in \mathbb{R}^{N \times N}.
		\label{eq:toeplitz}
	\end{equation}
	
	To balance analytical tractability and modeling accuracy, we adopt the BDMA to represent the spatial correlation among the $N$ ports.     \textcolor{black}{The BDMA approximation captures the dominant spatial correlation by grouping ports into locally correlated blocks, which is suitable when correlation varies smoothly across the FAS aperture (e.g., Jakes model). While more general non-stationary models offer higher fidelity, they significantly increase analytical complexity. BDMA thus provides a tractable balance between accuracy and insight, and the framework can be extended to more general models such as  \cite{Zhu2025} at the expense of tractability.}  In this approach, we have
	\begin{equation}
		\hat{\boldsymbol{\Sigma}} = \mathrm{Blkdiag}\!\left(\mathbf{C}_1, \dots, \mathbf{C}_B\right),
		\label{eq:bdma}
	\end{equation}
	where $\mathrm{Blkdiag}(\cdot)$ denotes the block-diagonal operator, and $\mathbf{C}_b$ is the correlation matrix of the $b$-th block given by
	\begin{equation}
		\mathbf{C}_b = \mathrm{toeplitz}\!\left(1, \mu_b^2, \dots, \mu_b^2\right) \in \mathbb{R}^{L_b \times L_b},
		\label{eq:block}
	\end{equation}
	subject to $\sum_{b=1}^{B} L_b = N$. Each block groups $L_b$ correlated ports characterized by an intra-block correlation coefficient $\mu_b$ determined via the BDMA algorithm in~\cite{An2023}.
	
	For the $L_b$ ports within the $b$-th block, the equivalent channel of the $k$-th port $(k \in \{1,\dots,L_b\})$ can be described as
	\begin{equation}
		\mathbf{h}_k =
		\sqrt{\frac{\alpha K}{K+1}}\,\bar{\mathbf{h}} +
		\mu_b \tilde{\mathbf{h}}_b +
		\sqrt{1-\mu_b^{2}}\,\mathbf{e}_k,
		\label{eq:h_equiv}
	\end{equation}
	where $\tilde{\mathbf{h}}_b$ and $\mathbf{e}_k$ are independent random vectors following $\mathcal{CN}\!\left(0, \frac{\alpha}{K+1}\mathbf{I}_M\right)$.\footnote{\textcolor{black}{Physically, BDMA approximates the spatial correlation induced by the finite FAS aperture $W\lambda$, where closely spaced ports exhibit strong correlation according to the Jakes model. As the correlation decays with spatial separation (via the Bessel function), ports within a localized region of the aperture experience similar fading and can be grouped into $B$ blocks of $L_b$ ports with approximately homogeneous intra-block correlation $\mu_b$. Inter-block correlation is neglected for tractability. Hence, the block structure reflects the spatial sampling of the aperture and the correlation decay across it.}}

	\subsection{Received Signal at the $k$-th Port}
	
	Let $x$ be the transmit symbol with $\mathbb{E}[|x|^2]=1$, and let $P$ denote the BS transmit power, then, the received baseband signal at the $k$-th FAS port is
	\begin{equation}
		y_k \;=\; \sqrt{P}\;\mathbf{h}_k^{\T}\,\mathbf{g}_{\mathrm{SIM}}\; x \;+\; n_k,
		\label{eq:yk}
	\end{equation}
	where $n_k\sim\mathcal{CN}(0,\sigma^2)$ is additive white Gaussian noise. 
	
	For convenience, define the  complex gain $C_k \;\triangleq\; \mathbf{h}_k^{\T}\,\mathbf{g}_{\mathrm{SIM}}$
	so that \eqref{eq:yk} becomes $y_k=\sqrt{P}\,C_k\,x+n_k$, and the instantaneous SNR at the $k$-th port is $	\gamma_k \;=\; \frac{P\,|C_k|^2}{\sigma^2}.$
	
	With port selection, the FAS activates $		k^{\star} \;=\; \arg\max_{k\in\{1,\dots,N\}} |C_k| ,$
		yielding the post-selection signal-to-noise ratio (SNR)  $		\gamma \;=\; \frac{P\,\max_k |C_k|^2}{\sigma^2}.$

	\section{Outage Probability }

	Building on the formulation provided in the previous section, if the target transmission rate is denoted by $R$, 
	the outage probability at the user can be expressed as
	\begin{equation}
		P_{\text{out}} = \Pr\!\big[\log_2(1+\gamma) < R\big].
	\end{equation}
	
	The proposed framework enables tractable analysis of the received SNR using only long-term channel statistics. The SIM provides wave-domain precoding at the transmitter, while the fluid antenna system enhances spatial diversity at the receiver, forming the basis for outage characterization and optimization of the SIM–FAS system.

With only statistical channel state information (CSI) available, the SIM phase shifts $\theta_{\ell,m}$ are assumed quasi-static. For the $b$-th BDMA block, the effective channel between the SIM-assisted BS and the $k$-th fluid-antenna port is given by	
	\begin{align}
		\mathbf{h}_k^{\T}\,\mathbf{g}_{\mathrm{SIM}}
		&	= 
		\sqrt{\frac{\alpha K}{K+1}}\,
		\bar{\bh}_{k}^{\T}\mathbf{g}_{\mathrm{SIM}}
		+ 
		\mu_{b}\tilde{\bh}_{b}^{\T}\mathbf{g}_{\mathrm{SIM}}\nn\\
		&	+ 
		\sqrt{1-\mu_{b}^{2}}\,\bee_{k}^{\T}\mathbf{g}_{\mathrm{SIM}}
		\label{eq:effective_channel}.
	\end{align}
	
	By defining $Z_{1}=\tilde{\bh}_{b}^{\T}\mathbf{g}_{\mathrm{SIM}}$, it follows that $Z_{1}$ is a linear transformation of a zero-mean 
	circular complex Gaussian vector. Consequently, $Z_{1}$ also follows a complex Gaussian distribution
	\begin{equation}
		Z_{1} \sim \mathcal{CN}(0,\tilde{\sigma}^{2}),
		\qquad 
		\tilde{\sigma}^{2} = \frac{\alpha}{K+1}\,\|\mathbf{g}_{\mathrm{SIM}}\|^{2}_{2}.
	\end{equation}
	
	Similarly, we define $Z_{2}=\bee_{k}^{\T}\mathbf{g}_{\mathrm{SIM}}$ with $Z_{2} \sim \mathcal{CN}(0,\tilde{\sigma}^{2})$ having the same variance with $Z_{1}$.
	
	Let $\delta = \sqrt{\frac{\alpha K}{K+1}}\,
	\bar{\bh}_{k}^{\T}\mathbf{g}_{\mathrm{SIM}}$ denote the deterministic 
	line-of-sight (LoS) component.  
	Conditioned on $\tilde{\bh}_{b}^{\T}\mathbf{g}_{\mathrm{SIM}}$, the envelope 
	$C_{k} = |\mathbf{h}_k^{\T}\,\mathbf{g}_{\mathrm{SIM}}|$ follows the distribution
	\begin{align}
		f_{C_{k}|\Delta_{b}}(r_{k}|r_{b})
		&=
		\frac{2r_{k}}{\tilde{\sigma}^{2}(1-\mu_{b}^{2})}
		\exp\!\left[-\frac{r_{k}^{2}+r_{b}^{2}}{\tilde{\sigma}^{2}(1-\mu_{b}^{2})}\right]\nn\\
		&
		\times \mathrm{I}_{0}\!\left(\frac{2r_{k}r_{b}}{\tilde{\sigma}^{2}(1-\mu_{b}^{2})}\right),
		\label{eq:ak_pdf}
	\end{align}
	where $\mathrm{I}_{0}(\cdot)$ denotes the modified Bessel function of the first kind 
	and zero order, and $\Delta_{b} = |\delta + \mu_{b}\tilde{\bh}_{b}^{\T}\mathbf{g}_{\mathrm{SIM}}|$.
	
	Given that $\Delta_{b}$ follows a Rician distribution, its probability density 
	function (PDF) is expressed as
	\begin{equation}
		f_{\Delta_{b}}(r_{b})
		=
		\frac{2r_{b}}{\tilde{\sigma}^{2}\mu_{b}^{2}}
		\exp\!\left[-\frac{r_{b}^{2}+|\delta|^{2}}{\tilde{\sigma}^{2}\mu_{b}^{2}}\right]
		\mathrm{I}_{0}\!\left(\frac{2r_{b}|\delta|}{\tilde{\sigma}^{2}\mu_{b}^{2}}\right).
		\label{eq:lambda_pdf}
	\end{equation}
	
	Conditioned on $\Delta_{b}$, the random variables $C_{1}, C_{2}, \ldots, C_{N}$ 
	within the $b$-th block are statistically independent. 
	Hence, the joint PDF of 
	$\{C_{k}\}$ given $\Delta_{b}$ can be expressed as
	\begin{align}
		&f_{C_{1},C_{2},\ldots,C_{N}|\Delta_{b}}
		(r_{1},r_{2},\ldots,r_{N}\,|\,r_{b})
		=
		\prod_{k\in\mathcal{K}_{b}}
		\frac{2r_{k}}{\tilde{\sigma}^{2}(1-\mu_{b}^{2})}\nn\\
		&\times	\exp\!\left[-\frac{r_{k}^{2}+r_{b}^{2}}
		{\tilde{\sigma}^{2}(1-\mu_{b}^{2})}\right]
		I_{0}\!\left(
		\frac{2r_{k}r_{b}}
		{\tilde{\sigma}^{2}(1-\mu_{b}^{2})}
		\right),
		\label{eq:joint_pdf_Ak}
	\end{align}
	where $\mathcal{K}_{b}$ denotes the set of indices associated with 
	the $b$-th block (i.e., $|\mathcal{K}_{b}|=L_{b}$).

	By combining  \eqref{eq:lambda_pdf} and \eqref{eq:joint_pdf_Ak}, 
	the joint cumulative distribution function (CDF) of 
	$\{C_{n}\}$ under the BDMA model can be written as \eqref{eq:cdf_An}, where $Q_{1}(\cdot,\cdot)$ is the first-order Marcum $Q$-function and 
	$\mathcal{K}_{b}$ denotes the set of ports within the $b$-th block. Finally, by setting all $r_{k}=\sqrt{\gamma_{\mathrm{th}}}$ and substituting into 
	\eqref{eq:cdf_An}, the overall outage probability of the SIM–FAS system is obtained as
	\eqref{eq:Pout_final}, where $\gamma_{\mathrm{th}}=(2^{R}-1)\sigma^{2}/P$ represents the   SNR threshold corresponding to the target data rate $R$.
	
	\begin{figure*}
		\begin{align}
			&\!\!\!\!\!\!\!\!\!\!\!\!	F_{ C_{1}, \ldots,C_{N}}(r_{1},\ldots,r_{N})
			\!=\!
			\prod_{b=1}^{B}\!\!
			\int_{0}^{\infty}
			\!\!\!		\frac{2r_{b}}{\tilde{\sigma}^{2}\mu_{b}^{2}}
			\!\exp\!\left[-\frac{r_{b}^{2}+|\delta|^{2}}{\tilde{\sigma}^{2}\mu_{b}^{2}}\right]
			\!\!\mathrm{I}_{0}\!\left(\frac{2r_{b}|\delta|}{\tilde{\sigma}^{2}\mu_{b}^{2}}\right)
			\!\!\!\prod_{k\in\mathcal{K}_{b}}
			\!\!\left[
			\!1 \!-\! Q_{1}\!\!\left(\!\!
			\sqrt{\!\frac{2}{\tilde{\sigma}^{2}(1-\mu_{b}^{2})}}r_{b},\,
			\sqrt{\!\frac{2}{\tilde{\sigma}^{2}(1-\mu_{b}^{2})}}r_{k}
			\!\!\right)
			\!\right]\!\!dr_{b}.
			\label{eq:cdf_An}\\
			P_{\text{out}}
			&	=
			\prod_{b=1}^{B}
			\int_{0}^{\infty}
			\frac{2r_{b}}{\tilde{\sigma}^{2}\mu_{b}^{2}}
			\exp\!\left[-\frac{r_{b}^{2}+|\delta|^{2}}{\tilde{\sigma}^{2}\mu_{b}^{2}}\right]
			\mathrm{I}_{0}\!\left(\frac{2r_{b}|\delta|}{\tilde{\sigma}^{2}\mu_{b}^{2}}\right)
			\left[
			1 - Q_{1}\!\left(
			\sqrt{\frac{2}{\tilde{\sigma}^{2}(1-\mu_{b}^{2})}}r_{b},\,
			\sqrt{\frac{2\gamma_{\mathrm{th}}}{\tilde{\sigma}^{2}(1-\mu_{b}^{2})}}
			\right)
			\right]^{L_{b}}dr_{b}.
			\label{eq:Pout_final}
		\end{align}
		\hrulefill
	\end{figure*}

	\section{Outage Probability Optimization}
	
This section minimizes the outage probability under SIM phase-shift constraints, leading to the following problem.
	
	\begin{subequations}\label{eq:subeqns}
		\begin{align}
			(\mathcal{P})~~&\min_{\theta_{\ell,m}} 	\;~			P_{\text{out}} \label{Maximization1} \\
			&\;\quad\;\;\;\;\;\;\;\!\!\!~\!		\mathbf{g}_{\mathrm{SIM}} = \left( \prod_{\ell=2}^{L} \boldsymbol{\Theta}_{\ell}\mathbf{W}_{\ell} \right)
			\boldsymbol{\Theta}_{1}\mathbf{w}_{1},
			\label{Maximization4} \\
			&~	\mathrm{\textcolor{black}{s.t.}}\;\;\;\;\;\;\!\!~\!	\boldsymbol{\Theta}_\ell = \mathrm{diag}\big(e^{j\theta_{\ell,1}}, e^{j\theta_{\ell,2}}, \dots, e^{j\theta_{\ell,M}}\big),
			\label{Maximization3} \\
			&	\;\quad\;\;\;\;\;\;\;\!\!\!~\!		\theta_{\ell,m}\in [0, 2\pi), l \in \mathcal{L}, m \in \{1,\dots,M\}	\label{Maximization8}.
		\end{align}
	\end{subequations}

Problem $(\mathcal{P})$ is nonconvex due to both the objective function and the constant-modulus constraint on $\boldsymbol{\Theta}_\ell$.

	\subsection{Proposed Algorithm}
	\textcolor{black}{The optimization is performed by iteratively updating the SIM phase shifts along the gradient direction of the outage probability, followed by projection onto the unit-modulus constraint set, which converges to a stationary point under appropriate step-size selection. In this work, a sufficiently small or diminishing step size ensures stability, while a properly tuned fixed step size achieves fast convergence within a few iterations.}\footnote{	\textcolor{black}{The adopted projected gradient approach follows similar principles to gradient-based optimization methods used in FAS-enabled systems (e.g., antenna position optimization and continuous relaxations of port selection) and reconfigurable surface designs \cite{Chai2022,Mei2024}. While the gradient expression is specific to the SIM structure, the iterative update and projection mechanism is general and widely applicable~\cite{Papazafeiropoulos2024a,Papazafeiropoulos2025,Papazafeiropoulos2024b}.}}  \textcolor{black}{The resulting projected gradient method is summarized in 
	Algorithm~\ref{Algoa1}. 
	The positive parameter $\bar{\mu}_{i}>0$  represents the step size along the gradient directions of the objective function.  	The feasible set associated with the unit-modulus constraint is defined as
	$
	\Theta_{l}= 
	\big\{	|\theta_{\ell,m}|=1,\; m=1,\ldots,M \big\}$. 		During each iteration, the newly computed update may fall outside 
	this feasible set. Therefore, it is projected back onto the corresponding 
	constraint manifold via the projection operator $P_{\Theta_{l}}(\cdot)$. 	 }
	
	\begin{algorithm}[th]
		\caption{Projected Gradient Ascent Method}
		\label{Algoa1}
		\begin{algorithmic}[1]
			\STATE \textbf{Input:} Initial value 
			$\theta_{\ell,0}$
			and step size $\mu_{i}>0$.
			\STATE \textbf{for} $i=1,2,\ldots$ \textbf{do}
			\STATE \quad 
			$\theta_{\ell,m}^{i+1} 
			= P_{\Theta_{l}}\!\left(
			\theta_{\ell,m}^{i}
			+ \bar{\mu}_{i}\nabla_{\theta_{\ell,m}}	P_{\text{out}}		
			\right)$
			\STATE \textbf{end for}
		\end{algorithmic}
	\end{algorithm}
	
	The analytical expression for the gradient 
	$\nabla_{\theta_{\ell,m}}	P_{\text{out}}$ is given by the following proposition in closed form.
	\begin{proposition}\label{prop1}
		The gradient of $	P_{\text{out}} $ with respect to $ \theta_{\ell,m}$  is obtained as
		\begin{align}
			\nabla_{\theta_{\ell,m}}	P_{\text{out}}
			&=	
			-\,\frac{2\alpha}{K+1}
			\!\!\left(\!\frac{\partial P_{\text{out}}}{\partial \tilde{\sigma}^{2}}\!\right) \!\!\Im\!\left\{
			e^{j\boldsymbol{\theta}_{\ell}} \odot \mathbf{g}_{\ell}
			\right\}\nn\\
			&-\, \sqrt{\frac{\alpha K}{K+1}} \!\!\left(\!\frac{\partial P_{\text{out}}}{\partial |\delta|}\!\right) \!\!
			\Im\!\Big\{e^{j\boldsymbol{\theta}_{\ell}} \odot \tilde{\mathbf{g}}_{\ell}^{(\delta)}
			\Big\},
			\label{eq:dPout_dtheta}
		\end{align}
		where $\frac{\partial P_{\text{out}}}{\partial \tilde{\sigma}^{2}}$, $\frac{\partial P_{\text{out}}}{\partial |\delta|}$, $ \mathbf{g}_{\ell}$, and $\mathbf{g}_{\ell}^{(\delta)} $ are provided in the proof.
	\end{proposition}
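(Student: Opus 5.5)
The plan is to use the chain rule. Note first that $P_{\text{out}}$ in \eqref{eq:Pout_final} depends on the phases $\theta_{\ell,m}$ only through the two scalars $\tilde{\sigma}^2=\frac{\alpha}{K+1}\|\mathbf{g}_{\mathrm{SIM}}\|_2^2$ and $|\delta|=\sqrt{\frac{\alpha K}{K+1}}\,|\bar{\mathbf{h}}^{\T}\mathbf{g}_{\mathrm{SIM}}|$; the remaining parameters $\mu_b$, $L_b$ and $\gamma_{\mathrm{th}}$ do not depend on $\boldsymbol{\theta}$. Hence
\begin{equation*}
\frac{\partial P_{\text{out}}}{\partial \theta_{\ell,m}}=\frac{\partial P_{\text{out}}}{\partial \tilde{\sigma}^2}\frac{\partial \tilde{\sigma}^2}{\partial \theta_{\ell,m}}+\frac{\partial P_{\text{out}}}{\partial |\delta|}\frac{\partial |\delta|}{\partial \theta_{\ell,m}},
\end{equation*}
and the proof splits into two independent tasks: (i) differentiate the integral expression with respect to the scalars $\tilde{\sigma}^2$ and $|\delta|$, and (ii) differentiate these scalars with respect to the phases through the SIM cascade \eqref{eq:SIM}.

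For (ii), I would fix layer $\ell$ and write $\mathbf{g}_{\mathrm{SIM}}=\mathbf{A}_\ell\boldsymbol{\Theta}_\ell\mathbf{b}_\ell$. Here $\mathbf{A}_\ell=\prod_{i=\ell+1}^{L}\boldsymbol{\Theta}_i\mathbf{W}_i$ collects the layers after $\ell$, and $\mathbf{b}_\ell=\mathbf{W}_\ell\cdots\boldsymbol{\Theta}_1\mathbf{w}_1$ collects those before it (with $\mathbf{b}_1=\mathbf{w}_1$). Then $\partial\mathbf{g}_{\mathrm{SIM}}/\partial\theta_{\ell,m}=j e^{j\theta_{\ell,m}}b_{\ell,m}\mathbf{a}_{\ell,m}$, where $\mathbf{a}_{\ell,m}$ is the $m$-th column of $\mathbf{A}_\ell$. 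This gives
\begin{equation*}
\frac{\partial\|\mathbf{g}_{\mathrm{SIM}}\|^2}{\partial\theta_{\ell,m}}=2\Re\{\mathbf{g}_{\mathrm{SIM}}^{\H}\,j e^{j\theta_{\ell,m}}b_{\ell,m}\mathbf{a}_{\ell,m}\}=-2\Im\{e^{j\theta_{\ell,m}}\,b_{\ell,m}\,\mathbf{g}_{\mathrm{SIM}}^{\H}\mathbf{a}_{\ell,m}\}.
\end{equation*}
Stacking over $m$ yields $\mathbf{g}_\ell=\mathrm{diag}(\mathbf{b}_\ell)\mathbf{A}_\ell^{\H}\mathbf{g}_{\mathrm{SIM}}$ (up to conjugation), which produces the first term with the factor $-\frac{2\alpha}{K+1}$.

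The $|\delta|$ term is handled similarly. Since $\partial|z|/\partial\theta=\Re\{z^*\partial z/\partial\theta\}/|z|$ with $z=\bar{\mathbf{h}}^{\T}\mathbf{g}_{\mathrm{SIM}}$, we get $\partial|z|/\partial\theta_{\ell,m}=-\Im\{e^{j\theta_{\ell,m}}b_{\ell,m}z^*\bar{\mathbf{h}}^{\T}\mathbf{a}_{\ell,m}\}/|z|$. This defines $\tilde{\mathbf{g}}^{(\delta)}_\ell=\mathrm{diag}(\mathbf{b}_\ell)\mathbf{A}_\ell^{\T}\bar{\mathbf{h}}\,z^*/|z|$ and produces the second term with the factor $\sqrt{\alpha K/(K+1)}$.

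For (i), which I expect to be the main obstacle, I would use the product rule over the blocks:
\begin{equation*}
\partial P_{\text{out}}=\sum_b\Big(\prod_{b'\neq b}\mathcal{I}_{b'}\Big)\partial\mathcal{I}_b,
\end{equation*}
where $\mathcal{I}_b$ is the $b$-th integral. I would then differentiate under the integral sign, justified by dominated convergence thanks to the Gaussian decay of the integrand. The $|\delta|$-derivative touches only the Rician density $f_{\Delta_b}$. Using $\mathrm{I}_0'=\mathrm{I}_1$,
\begin{equation*}
\partial_{|\delta|}f_{\Delta_b}=f_{\Delta_b}\Big[-\frac{2|\delta|}{\tilde{\sigma}^2\mu_b^2}+\frac{2r_b}{\tilde{\sigma}^2\mu_b^2}\frac{\mathrm{I}_1(\cdot)}{\mathrm{I}_0(\cdot)}\Big].
\end{equation*}
The $\tilde{\sigma}^2$-derivative touches both the density and the bracket $[1-Q_1(a,b)]^{L_b}$. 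For the bracket I would use the known partial derivatives of the Marcum function, $\partial_a Q_1(a,b)=b\,e^{-(a^2+b^2)/2}\mathrm{I}_1(ab)$ and $\partial_b Q_1(a,b)=-b\,e^{-(a^2+b^2)/2}\mathrm{I}_0(ab)$, together with the fact that both arguments scale as $(\tilde{\sigma}^2)^{-1/2}$, so that $\partial a/\partial\tilde{\sigma}^2=-a/(2\tilde{\sigma}^2)$ and likewise for $b$. The final expressions remain one-dimensional integrals of the same type as \eqref{eq:Pout_final}, so they are closed form in the same sense as $P_{\text{out}}$ itself. These are the quantities $\partial P_{\text{out}}/\partial\tilde{\sigma}^2$ and $\partial P_{\text{out}}/\partial|\delta|$ referred to in the statement. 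The remaining care points are the sign and conjugation conventions, which must match the $\Im\{e^{j\boldsymbol{\theta}_\ell}\odot\cdot\}$ form.
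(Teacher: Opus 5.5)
Your proposal is correct and follows essentially the paper's route: the chain rule through the scalars $\tilde{\sigma}^{2}$ and $|\delta|$, the split of the cascade around layer $\ell$ (your $\mathbf{A}_\ell,\mathbf{b}_\ell$ are the paper's $\mathbf{D}_\ell,\mathbf{u}_\ell$), and differentiation under the integral sign using the Marcum-$Q$ partials and the $(\tilde{\sigma}^{2})^{-1/2}$ scaling of its arguments; your block-wise product rule is equivalent to the paper's $P_{\text{out}}\sum_b J_b^{-1}\,\partial J_b/\partial\tilde{\sigma}^{2}$. Your details are in fact cleaner than the appendix in three places: the $|\delta|$-derivative of the Rician density carries $-2|\delta|/(\tilde{\sigma}^{2}\mu_b^{2})$ where the paper writes $+$, $\partial|\delta|/\partial\theta_{\ell,m}$ needs the phase factor $z^{*}/|z|$ that the paper's $\mathbf{g}_\ell^{(\delta)}=\mathbf{u}_\ell\odot\mathbf{s}_\ell$ omits, and your componentwise formula settles the conjugation you left open as $\mathbf{g}_\ell=\mathrm{diag}(\mathbf{b}_\ell)\mathbf{A}_\ell^{\mathsf{T}}\mathbf{g}_{\mathrm{SIM}}^{*}$, which the paper's $\mathbf{D}_\ell^{H}\mathbf{g}_{\mathrm{SIM}}$ form gets wrong.
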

	\begin{proof}
		Please see Appendix~\ref{Prop1proof}.	
	\end{proof}
	
	Regarding, the complexity of Algorithm~\eqref{Algoa1}, it is linear in the number of BDMA blocks $B$, i.e., $\mathcal(O(B))$. Also, it depends on the integration during the calculation of $P_{\text{out}}$. 
	
	\section{Numerical Results}
	
Numerical simulations validate the proposed SIM–FAS framework. The SIM is placed at height $H=10$~m, with the user located $60$~m away. Each layer contains $M\in{16,32}$ meta-atoms of size $\lambda/2 \times \lambda/2$, with total SIM thickness $T_{\mathrm{SIM}}=5\lambda$. Large-scale fading follows~\cite[Eq.~(16)]{An2023} with exponent $b=3.5$ and Rician factor $K=2$, while noise power is $\sigma^2=-96$~dBm. The FAS span is set to $W=5$, and the target rate is $6$~bps/Hz. Port correlation is fixed to $\mu_b^2=0.97$ for all blocks~\cite{RamirezEspinosa2024}.

	 \begin{figure*}[t]
		\begin{minipage}{0.33\textwidth}
			\centering
			\captionbox{Outage probability versus $P$, varying $M$.\label{Fig01}}{	\includegraphics[trim=0cm -1.2cm 0cm 0.1cm, clip=true, width=2.2in]{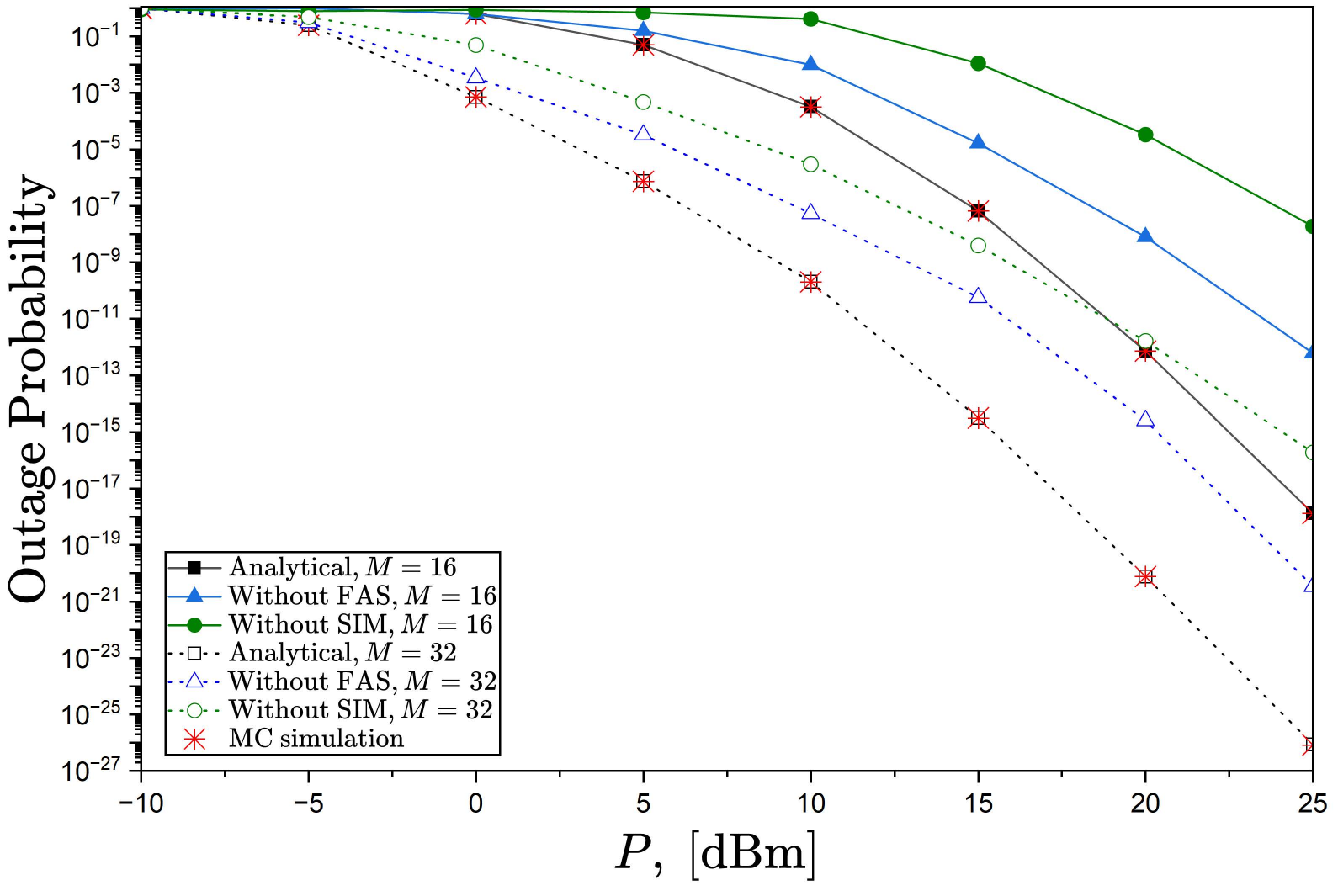}}
			\vspace*{-0.2cm}
		\end{minipage}
		\begin{minipage}{0.33\textwidth}
			\centering
			\captionbox{Outage probability versus $P$, varying $L$.\label{Fig02}}{	\includegraphics[trim=0cm 0.0cm 0cm 0.0cm, clip=true,height=1.6in, width=2.2in]{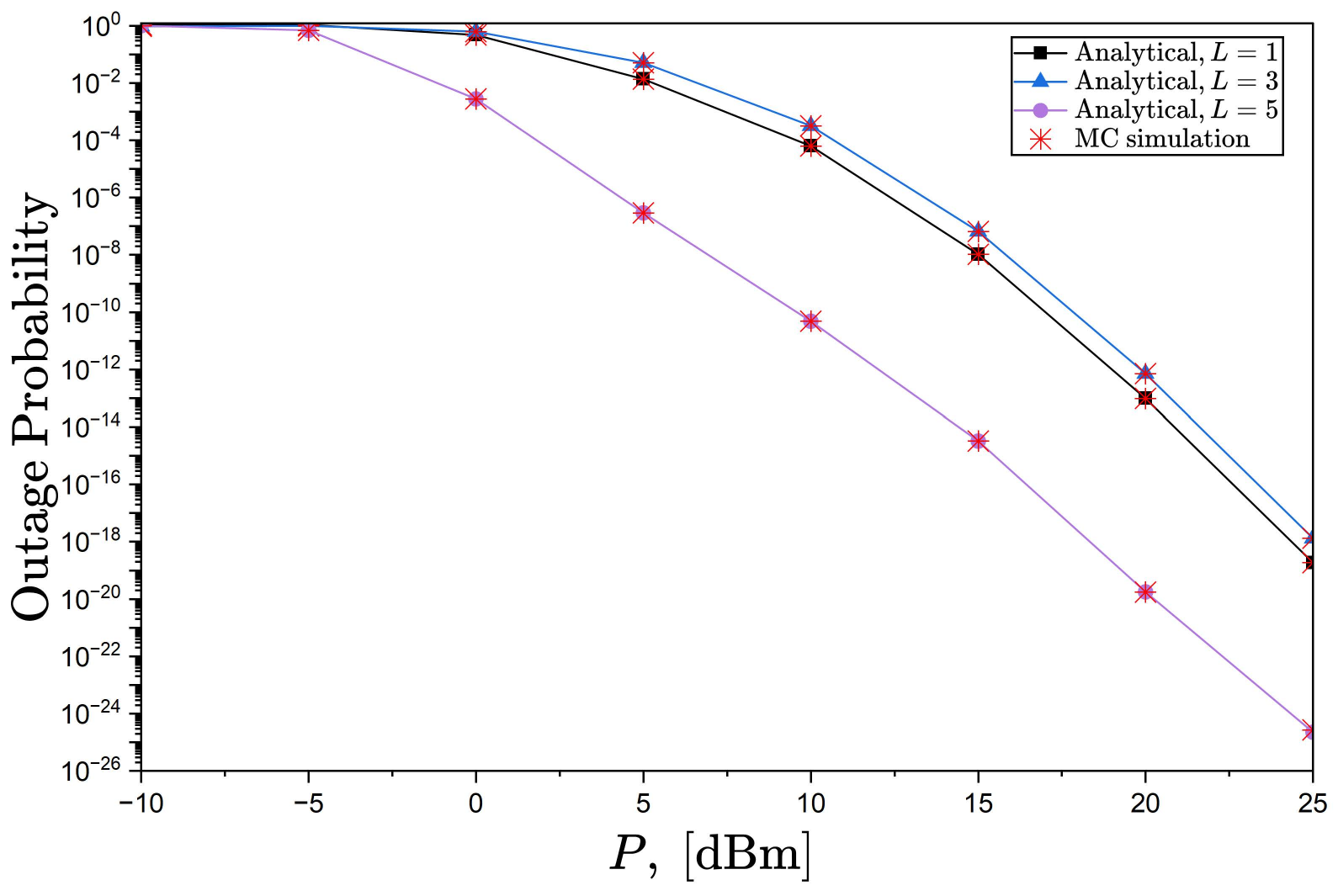}}
			\\ 
			\vspace*{-0.2cm}
		\end{minipage}
		\begin{minipage}{0.33\textwidth}
			\centering
			\captionbox{Outage probability versus $N$, varying $P$.\label{Fig03}}{	\includegraphics[trim=0cm 0cm 0cm 0.1cm, clip=true, height=1.6in, width=2.1in]{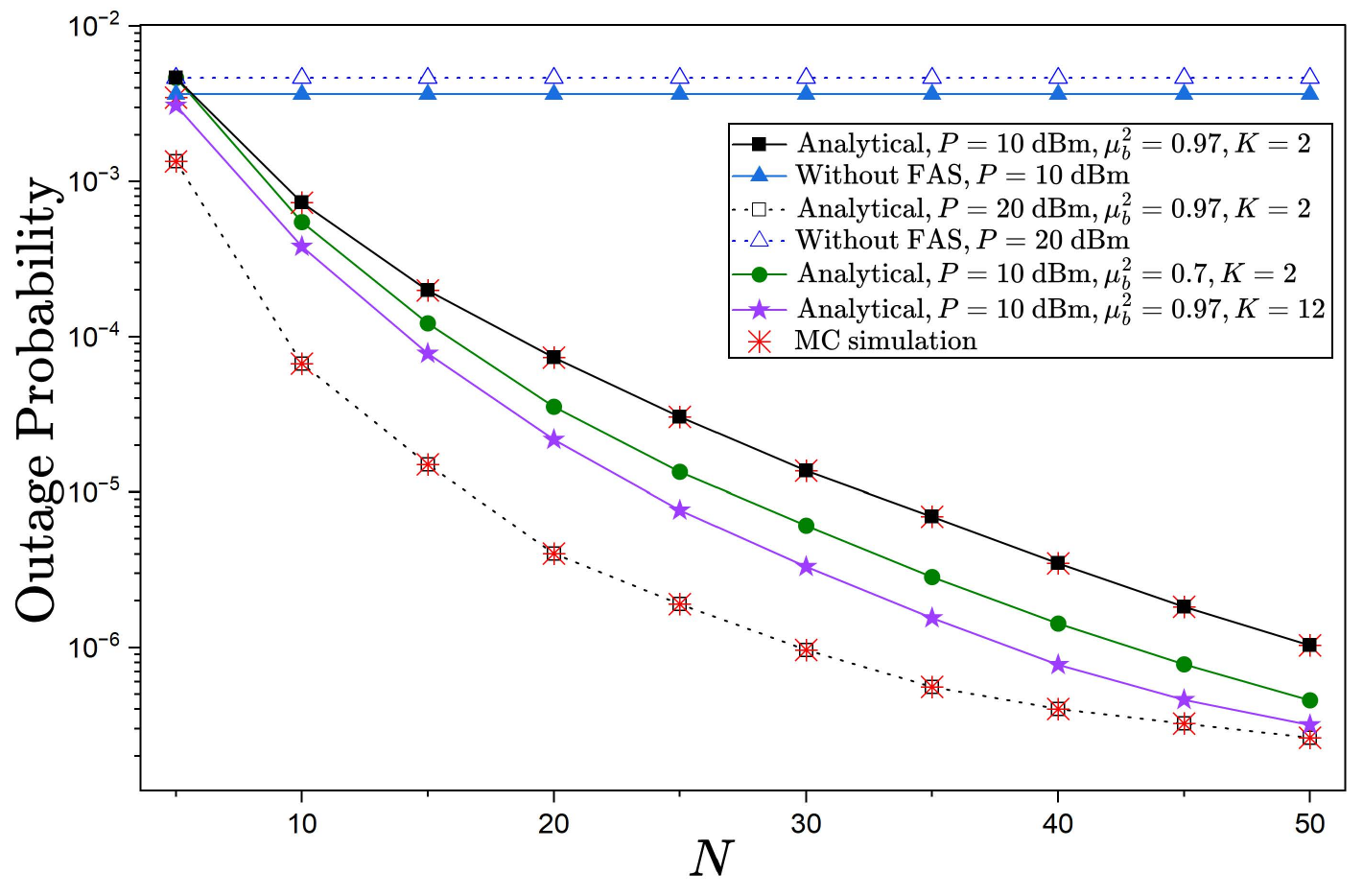}}
			\\ 
			\vspace*{-0.2cm}
		\end{minipage}
		\label{Fig22}
	\end{figure*}

Fig.~\ref{Fig01} shows the outage probability versus transmit power $P$ for $N=50$ and $L=3$. The close agreement between analytical and Monte Carlo results validates the BDMA-based analysis~\cite{RamirezEspinosa2024}. Increasing $M$ improves performance due to higher array gain, while comparisons with the`` Without SIM'' and ``Without FAS'' cases demonstrate the substantial outage reduction achieved by jointly employing SIM and FAS.
	
		Fig.~\ref{Fig02} depicts the impact of the number of SIM surfaces $L$ on the outage probability for $M=16$, where an increase in $L$ yields a decrease of the outage probability due to higher wave-domain degrees of freedom, enabling stronger constructive combining and better focusing of the transmitted energy toward the user.

Fig.~\ref{Fig03} shows the outage probability versus the number of ports 
$N$ for different 
$P$. In \eqref{eq:Pout_final},  $N$ affects performance through the BDMA block sizes 
$L_b$, which determine the diversity order. Thus, increasing 
$N$ improves performance, while the ``Without FAS'' case remains unaffected by 
$N$. \textcolor{black}{In addition, we observe that higher intra-block correlation $\mu_b$ reduces effective diversity, slowing the outage decay with $N$, while lower $\mu_b$ enhances performance. Moreover, larger $K$ strengthens the LoS component and improves coherent combining, reducing outage, whereas smaller $K$ increases fading severity.}
\textcolor{black}{Note that the diversity gain is governed by the effective channel rank rather than the number of ports 
	$N$ \cite{Zhu2025a}. Under BDMA, this corresponds to the number of independent blocks, leading to diminishing returns as correlation increases as shown by the figure.}

	\section{Conclusion}
This work proposes a novel SIM–FAS framework with tractable outage analysis via BDMA. A closed-form outage expression was derived and used for SIM optimization. Numerical results validate the analysis and show clear gains over benchmarks. Future work may extend the model to multi-antenna BS scenarios.
	\begin{appendices}
		
		\section{Proof of Proposition~\ref{prop1}}\label{Prop1proof}	
		
		Starting from \eqref{eq:Pout_final}, the outage probability can be expressed as
		\begin{equation}
			P_{\text{out}}(\tilde{\sigma}^{2}, |\delta|) = \prod_{b=1}^{B} J_{b}(\tilde{\sigma}^{2}, |\delta|).
		\end{equation}
		
		where
		\begin{align}
			J_{b}(\tilde{\sigma}^{2})
			&= \int_{0}^{\infty}
			\underbrace{
				\frac{2r_{b}}{\tilde{\sigma}^{2}\mu^{2}}
				\exp\!\left(-\frac{r_{b}^{2}+|\delta|^{2}}{\tilde{\sigma}^{2}\mu^{2}}\right)		I_{0}\!\left(\frac{2r_{b}|\delta|}{\tilde{\sigma}^{2}\mu^{2}}\right)
			}_{\triangleq\, f_{b}(r_{b};\,\tilde{\sigma}^{2},  |\delta|)}
			\nn\\
			&	\bigg[
			1 - Q_{1}\!\left(
			\sqrt{\tfrac{2r_{b}}{\tilde{\sigma}^{2}(1-\mu^{2})}},
			\sqrt{\tfrac{2\gamma_{\mathrm{th}}}{\tilde{\sigma}^{2}(1-\mu^{2})}}
			\right)
			\bigg]^{L_{b}}
			dr_{b}.
			\label{eq:Jb_sigma}
		\end{align}
		Therefore, by the chain rule, we obtain
		\begin{equation}
			\frac{\partial P_{\text{out}}}{\partial \theta_{\ell,m}} =
			\frac{\partial P_{\text{out}}}{\partial \tilde{\sigma}^{2}}
			\frac{\partial \tilde{\sigma}^{2}}{\partial \theta_{\ell,m}}
			+
			\frac{\partial P_{\text{out}}}{\partial |\delta|}
			\frac{\partial |\delta|}{\partial \theta_{\ell,m}}.\label{eq3}
		\end{equation}

		Firstly, differentiating $P_{\text{out}}$ with respect to $\tilde{\sigma}^{2}$ yields
		\begin{equation}
			\frac{\partial P_{\text{out}}}{\partial \tilde{\sigma}^{2}}
			= P_{\text{out}}
			\sum_{b=1}^{B}
			\frac{1}{J_{b}}
			\frac{\partial J_{b}}{\partial \tilde{\sigma}^{2}},
			\label{eq:dPout_dsigma}
		\end{equation}
		where the term $\partial J_{b}/\partial \tilde{\sigma}^{2}$ is obtained by differentiating 
		under the integral sign, since $\tilde{\sigma}^{2}$ appears in the exponential, 
		in the argument of the modified Bessel function $I_{0}(\cdot)$, 
		and in the parameters of the Marcum $Q_{1}$ function.
		
		The effective variance is related to the SIM  as
		\begin{equation}
			\tilde{\sigma}^{2} = \frac{\alpha}{K+1}\,\|\mathbf{g}_{\mathrm{SIM}}(\boldsymbol{\Theta})\|_{2}^{2}.
		\end{equation}
		Hence,
		\begin{equation}
			\frac{\partial \tilde{\sigma}^{2}}{\partial \boldsymbol{\Theta}_{\ell}}
			= \frac{\alpha}{K+1}\,
			\frac{\partial \|\mathbf{g}_{\mathrm{SIM}}\|_{2}^{2}}{\partial \boldsymbol{\Theta}_{\ell}},\label{der1}
		\end{equation}
		where $
		d\|\mathbf{g}_{\mathrm{SIM}}\|_{2}^{2} = 2\,\Re\!\{\mathbf{g}_{\mathrm{SIM}}^{H} d\mathbf{g}_{\mathrm{SIM}}\}$.
		
		For a SIM with a single RF feed, the transmitted field 
		can be written as
		\begin{equation}
			\mathbf{g}_{\mathrm{SIM}} = \mathbf{D}_{\ell}\,\boldsymbol{\Theta}_{\ell}\,\mathbf{u}_{\ell},\label{eq1}
		\end{equation}
		where
		\begin{align}
			\mathbf{D}_{\ell} &\triangleq \boldsymbol{\Theta}_{L}
			\mathbf{W}_{L}\boldsymbol{\Theta}_{L-1}\mathbf{W}_{L-1}\cdots,\boldsymbol{\Theta}_{\ell+1}\mathbf{W}_{\ell+1} \\
			\mathbf{u}_{\ell} &\triangleq 
			\mathbf{W}_{\ell}\boldsymbol{\Theta}_{\ell-1}\mathbf{W}_{\ell-1}\cdots\boldsymbol{\Theta}_{1}\mathbf{w}_{1}.
		\end{align}
		Given that  $\boldsymbol{\Theta}_{\ell} = \mathrm{diag}\!\left(e^{j\theta_{\ell,1}},\ldots,
		e^{j\theta_{\ell,M}}\right)$, we obtain
		\begin{equation}
			\frac{\partial \mathbf{g}_{\mathrm{SIM}}}{\partial \theta_{\ell,m}}
			= j e^{j\theta_{\ell,m}} u_{\ell,m}\, \mathbf{d}_{\ell,m},\label{eq2}
		\end{equation}
		where $\mathbf{d}_{\ell,m}$ is the $m$th column of $\mathbf{D}_{\ell}$. 
		Therefore,
		\begin{align}
			\frac{\partial \|\mathbf{g}_{\mathrm{SIM}}\|_{2}^{2}}{\partial \theta_{\ell,m}}
			&= 2\,\Re\!\left\{\mathbf{g}_{\mathrm{SIM}}^{H} \frac{\partial \mathbf{g}_{\mathrm{SIM}}}{\partial \theta_{\ell,m}}\right\}\nn\\
			&		= -2\,\Im\!\left\{
			e^{j\theta_{\ell,m}} u_{\ell,m}\,\mathbf{g}_{\mathrm{SIM}}^{H}\mathbf{d}_{\ell,m}
			\right\}.\label{der2}
		\end{align}
		By substituting \eqref{der2} into \eqref{der1},  we obtain
		\begin{align}
			\frac{\partial \tilde{\sigma}^{2}}{\partial \theta_{\ell,m}}
			&= -\,\frac{2\alpha}{K+1}\,
			\Im\!\left\{
			e^{j\theta_{\ell,m}} u_{\ell,m}\,\mathbf{g}_{\mathrm{SIM}}^{H}\mathbf{d}_{\ell,m}
			\right\}\nn\\
			&=-\,\frac{2\alpha}{K+1}
			\Im\!\left\{
			e^{j\boldsymbol{\theta}_{\ell}} \odot \mathbf{g}_{\ell}
			\right\},
			\label{eq:dsigma_dtheta}
		\end{align}
		where $	\mathbf{g}_{\ell} \triangleq 
		\mathrm{diag}\!\left(\mathbf{D}_{\ell}^{H} \mathbf{g}_{\mathrm{SIM}}\right)
		\odot \mathbf{u}_{\ell}$ with $\odot$ denoting the Hadamard (elementwise) product.  Note that $e^{j\boldsymbol{\theta}_{\ell}}$ denotes the elementwise complex exponential 
		of the phase vector $\boldsymbol{\theta}_{\ell}=
		[\theta_{\ell,1},\,\theta_{\ell,2},\,\ldots,\,\theta_{\ell,M}]^{\mathsf T}$, and $\Im\{\cdot\}$ denotes the 
		elementwise imaginary part.
		
		By combining \eqref{eq:dPout_dsigma} and \eqref{eq:dsigma_dtheta}, we obtain the first part of $ \eqref{eq3}$.
		
		The remaining term, $\partial P_{\text{out}}/\partial \tilde{\sigma}^{2}$, 
		is computed from \eqref{eq:dPout_dsigma} as
		\begin{align}
			\frac{\partial J_{b}}{\partial \tilde{\sigma}^{2}}
			&= 
			\int_{0}^{\infty}
			\frac{\partial}{\partial \tilde{\sigma}^{2}}
			\Big(
			f_{b}(r_{b};\,\tilde{\sigma}^{2})
			\big[1-Q_{1}(\cdot,\cdot)\big]^{L_{b}}
			\Big)
			dr_{b},
		\end{align}
		where the inner derivatives are obtained using the relations
		\begin{align*}
			&	\frac{d}{d\tilde{\sigma}^{2}}\!\left(\frac{1}{\tilde{\sigma}^{2}}\right)
			= -\frac{1}{(\tilde{\sigma}^{2})^{2}},~ 
			\frac{d}{d\tilde{\sigma}^{2}}e^{-c/\tilde{\sigma}^{2}}
			= \frac{c}{(\tilde{\sigma}^{2})^{2}} e^{-c/\tilde{\sigma}^{2}},\\
			\frac{d}{d\tilde{\sigma}^{2}} &I_{0}\!\left(\frac{\kappa}{\tilde{\sigma}^{2}}\right)
			= -\frac{\kappa}{(\tilde{\sigma}^{2})^{2}} I_{1}\!\left(\frac{\kappa}{\tilde{\sigma}^{2}}\right),~
			\frac{\partial Q_{1}}{\partial a}
			= b e^{-(a^{2}+b^{2})/2} I_{1}(ab),\\
			&\!\!\!\!\!\!\!\!\frac{\partial Q_{1}}{\partial b}
			= -b e^{-(a^{2}+b^{2})/2} I_{0}(ab),~\frac{\partial a}{\partial \tilde{\sigma}^{2}}
			= -\frac{a}{2\tilde{\sigma}^{2}},~
			\frac{\partial b}{\partial \tilde{\sigma}^{2}}
			= -\frac{b}{2\tilde{\sigma}^{2}}.
		\end{align*}
		Substituting these expressions into $\partial J_{b}/\partial \tilde{\sigma}^{2}$ 
		provides this partial derivative.

		Secondly, differentiating under the integral sign with respect to $|\delta|$ gives
		\begin{align}
			&	\frac{\partial P_{\mathrm{out}}}{\partial |\delta|}
			= 
			P_{\mathrm{out}}
			\sum_{b=1}^{B} 
			\frac{1}{J_b}s
			\int_{0}^{\infty} 
			f_b(r_b; \tilde{\sigma}^2, |\delta|)
			\Bigg[
			\frac{2|\delta|}{\tilde{\sigma}^2 \mu_b^2}
			+
			\frac{2r_b}{\tilde{\sigma}^2 \mu_b^2}
			\frac{I_1(u)}{I_0(u)}
			\Bigg]\nn\\
			&\times
			[1 - Q_1(a,b)]^{L_b}
			\, dr_b,
			\label{eq:A_dPout_ddelta}
		\end{align}
		where $u = \frac{2 r_b |\delta|}{\tilde{\sigma}^2 \mu_b^2}$.

		For the  dependence of $|\delta|$ on $\theta_{\ell,m}$, using similar steps to  \eqref{eq1}-\eqref{eq2}, we obtain
		\begin{align}
			\frac{\partial|\delta|}{\partial\boldsymbol{\theta}_\ell}
			= -\sqrt{\frac{\alpha K}{K+1}}\;
			\Im\!\Big\{e^{j\boldsymbol{\theta}_{\ell}} \odot \tilde{\mathbf{g}}_{\ell}^{(\delta)}
			\Big\}
			\label{eq:dabsdelta_dtheta},
		\end{align}
		where $\boldsymbol{s}_{\ell} \triangleq \mathbf{D}_{\ell}^{\mathrm{T}} \, \bar{\mathbf{h}}_{k} \in \mathbb{C}^{M \times 1}
		$ and $\mathbf{g}_{\ell}^{(\delta)} \triangleq \mathbf{u}_{\ell} \odot \boldsymbol{s}_{\ell}$. This completes the derivation of the second derivative in \eqref{eq3}, and completes the proof.
		
	\end{appendices}
	
	\bibliographystyle{IEEEtran}

	\bibliography{IEEEabrv,bibl}
\end{document}